\newcommand{\A}{\mathcal{A}}
\newcommand{\B}{\mathcal{B}}
\newcommand{\C}{\mathcal{C}}
\newcommand{\D}{\mathcal{D}}
\newcommand{\M}{\mathcal{C}}
\newcommand{\set}[1]{\{ #1 \}}
\newcommand{\val}[1]{\text{val}(#1)}
\newcommand{\prob}[1]{\mathbb{P}_{#1}}
\newcommand{\merge}{\textrm{next\_transition}}
\newcommand{\chck}{\textrm{check}}
\newcommand{\apply}{\textrm{apply}}
\newcommand{\finish}{\textrm{next\_word}}
\newcommand{\wait}{\textrm{wait}}
\newcommand{\supp}{\textrm{supp}}
\begin{document}

\title{Two Recursively Inseparable Problems\\
for Probabilistic Automata\thanks{The research leading to these results has received funding from the European Union's Seventh Framework Programme (FP7/2007-2013) under grant agreement n\textsuperscript{o} 259454 (GALE) and from the French Agence Nationale de la Recherche projects EQINOCS (ANR-11-BS02-004) and STOCH-MC (ANR-13-BS02-0011-01).}}

\author{Nathana\"el Fijalkow\inst{1,2} \and Hugo Gimbert\inst{3} \and Florian Horn\inst{1} \and Youssouf Oualhadj\inst{4}}
\institute{LIAFA, Université Paris 7, France,\\
University of Warsaw, Poland,\\
LaBRI, Université de Bordeaux, France,\\
Universit\'e de Mons, Belgium.}

\maketitle

\begin{abstract}
This paper introduces and investigates decision problems for \textit{numberless} probabilistic automata,
\textit{i.e.} probabilistic automata where the \textit{support} of each probabilistic transitions is specified, but the exact values of the probabilities are \textit{not}.
A numberless probabilistic automaton can be \emph{instantiated} into a probabilistic automaton
by specifying the exact values of the non-zero probabilistic transitions.

We show that the two following properties of numberless probabilistic automata are
recursively inseparable:
\begin{itemize}
	\item[\textbullet] all instances of the numberless automaton have value $1$,
	\item[\textbullet] no instance of the numberless automaton has value $1$.
\end{itemize}
\end{abstract}

\section{Introduction}
In 1963 Rabin~\cite{Rabin63} introduced the notion of probabilistic
automata,
which are finite automata able to randomise over transitions.
A probabilistic automaton has a finite set of control states $Q$, and
processes finite words;
each transition consists in updating the control state
according to a given probabilistic distribution determined by the
current state and the input letter.
This powerful model has been widely studied and has applications in many
fields
like software verification~\cite{CDHR07}, image processing~\cite{CK97},
computational biology~\cite{DEKM99} and speech processing~\cite{M97}.

Several algorithmic properties of probabilistic automata have been
considered in the literature,
sometimes leading to efficient algorithms.
For instance, \textit{functional equivalence} is decidable in polynomial
time~\cite{Schutzenberger61,Tzeng92},
and even faster with randomised algorithms, which led to applications in
software verification~\cite{KMOWW11}.

However, many natural decision problems are undecidable,
and part of the literature on probabilistic automata is about
\textit{intractability results}.
For example the \textit{emptiness}, the \textit{threshold isolation} and the \textit{value 1}
problems are undecidable~\cite{Paz71,Bertoni77,GO10}.

A striking result due to Condon and Lipton~\cite{CL89} states that, for every $\epsilon>0$, the two following problems are
recursively inseparable: given a probabilistic automaton $\A$,
\begin{itemize}
	\item[\textbullet] does $\A$ accept some word with probability greater than $1 - \varepsilon$?
	\item[\textbullet] does $\A$ accept every word with probability less than $\varepsilon$?
\end{itemize}

\vskip1em
In the present paper we focus on \emph{numberless} probabilistic automata,
\textit{i.e.} probabilistic automata whose non-zero probabilistic transitions are specified
but the exact values of the probabilities are not.
A numberless probabilistic automaton can be \emph{instantiated} into a probabilistic automaton
by specifying the exact values of the non-zero probabilistic transitions (see Section~\ref{sec:defs} for formal definitions).

\vskip1em
The notion of numberless probabilistic automaton is motivated by the following example.
Assume we are given a digital chip modelled
as a finite state machine controlled by external inputs.
The internal transition structure of the chip is known but the transitions themselves are not observable.
We want to compute an initialisation input sequence that puts the chip in a particular initial state.
In case some of the chip components have failure probabilities,
this can be reformulated as a value $1$ problem for the underlying probabilistic automaton:
is there an input sequence whose acceptance probability is arbitrarily close to $1$?
Assume now that the failure probabilities are not fixed \emph{a priori}
but we can tune the quality of our components
and choose the failure probabilities,
for instance by investing in better components.
Then we are dealing with a numberless probabilistic automaton
and we would like to determine whether it can be instantiated into a
probabilistic automaton with value $1$,
in other words we want to solve an \emph{existential value $1$ problem}
for the numberless probabilistic automaton.
If the failure probabilities are unknown then we are facing a second kind
of problem called the \emph{universal value $1$ problem}:
determine whether all instances of the automaton
have value $1$.
We also consider variants where the freedom to choose transition probabilities
is restricted to some intervals, that we call \emph{noisy value 1 problems}.


\vskip1em
One may think that relaxing the constraints on the exact transition
probabilities makes things algorithmically much easier.
However this is not the case, and we prove that the existential and universal value $1$ problems are recursively inseparable:
given a numberless probabilistic automaton $\M$,
\begin{itemize}
	\item[\textbullet] do all instances of $\M$ have value $1$?
	\item[\textbullet] does no instance of $\M$ have value $1$?
\end{itemize}

This result is actually a corollary of a generic construction which
constitutes the technical core of the paper
and has the following properties.
For every simple probabilistic automaton $\A$,
we construct a numberless probabilistic automaton $\M$ such that
the three following properties are equivalent:
\begin{enumerate}
	\item[(i)] $\A$ has value $1$,
	\item[(ii)] one of the instances of $\M$ has value $1$,
	\item[(iii)] all instances of $\M$ have value $1$.
\end{enumerate}

The definitions are given in Section~\ref{sec:defs}.
The main technical result appears in Section~\ref{sec:numberless_undecidability},
where we give the generic construction whose properties are described above.
In Section~\ref{sec:consequences}, we discuss the implications of our results,
first for the noisy value $1$ problems,
and second for probabilistic B\"uchi automata~\cite{BBG12}.


\section{Definitions}
\label{sec:defs}
Let $A$ be a finite alphabet. A (finite) word $u$ is a (possibly empty) sequence of letters $u = a_0 a_1 \cdots a_{n-1}$;
the set of finite words is denoted by $A^*$. 

A probability distribution over a finite set $Q$ is a function $\delta : Q \rightarrow \mathbb{Q}_{\ge 0}$
such that $\sum_{q \in Q} \delta(q) = 1$; we denote by $\frac{1}{3} \cdot q + \frac{2}{3} \cdot q'$ the distribution
that picks $q$ with probability $\frac{1}{3}$ and $q'$ with probability $\frac{2}{3}$,
and by $q$ the trivial distribution picking $q$ with probability~$1$. The support of a distribution $\delta$ is the set of states picked with positive probability,
\textit{i.e.}, $\supp(\delta) = \set{q \in Q \mid \delta(q) > 0}$.
Finally, the set of probability distributions over $Q$ is $\D(Q)$.

\begin{definition}[Probabilistic automaton]
A \emph{probabilistic automaton} (PA) is a tuple $\A = (Q, A, q_0, \Delta, F)$,
where $Q$ is a finite set of states, $A$ is the finite input alphabet, 
$q_0 \in Q$ is the initial state, $\Delta: Q \times A \rightarrow D(Q)$ is the \emph{probabilistic transition function}, and $F \subseteq Q$ is the set of accepting states.
\end{definition}


For convenience, we also use $\prob{\A}(s \xrightarrow{u} t)$ to denote the probability of going from the state $s$ to the state $t$ reading $u$, $\prob{\A}(s \xrightarrow{u} S)$ to denote the probability of going from the state $s$ to a state in $S$ reading $u$, and $\prob{\A}(u)$, the \emph{acceptance probability} of a word $u \in A^*$, to denote $\prob{\A}(q_0 \xrightarrow{u} F)$.

We often consider the case of \emph{simple probabilistic automata}, where the transition probabilities can only be $0$, $\nicefrac{1}{2}$, or $1$.

\begin{definition}[Value]
The \emph{value} of a PA $\A$, denoted $\val{\A}$,
is the supremum acceptance probability over all input words
\[
\val{\A} = \sup_{u \in A^*} \prob{\A}(u)\enspace.
\]
\end{definition}

\noindent The value $1$ problem asks,
given a (simple) PA $\A$ as input, 
whether $\val{\A} = 1$.

\begin{theorem}[\cite{GO10}]\label{thm:value_1}
The value $1$ problem is undecidable for simple PA.
\end{theorem}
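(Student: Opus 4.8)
The plan is to reduce from a known undecidable problem about Turing machines. The canonical approach for undecidability of the value $1$ problem, following the ideas of Gimbert and Oualhadj~\cite{GO10}, is to encode an instance of the emptiness problem for probabilistic automata (or directly the halting problem) into the value $1$ question. I would start from the \emph{emptiness problem} with threshold: given a PA $\B$ and a rational threshold $\lambda$, decide whether some word is accepted with probability exceeding $\lambda$. This problem is undecidable by the classical result of Paz~\cite{Paz71}, and it is the natural source of undecidability since it already expresses the difficulty of quantifying over arbitrarily long input words.

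The technical heart of the reduction is to convert a \emph{strict threshold} condition ($\exists u,\ \prob{\B}(u) > \lambda$) into a \emph{value $1$} condition ($\sup_u \prob{\A}(u) = 1$). The idea I would pursue is an amplification gadget: construct a new automaton $\A$ that, on reading a word of the form $u \# u \# \cdots \# u$ (the same word $u$ repeated $k$ times, separated by a fresh symbol $\#$), simulates $k$ independent runs of $\B$ on $u$ and accepts if a suitable majority or threshold of these runs accept. If $\prob{\B}(u) > \lambda$ for some word $u$, then repeating $u$ many times drives the acceptance probability of $\A$ arbitrarily close to $1$ by a concentration argument, giving $\val{\A} = 1$. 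Conversely, if every word is accepted by $\B$ with probability at most $\lambda$ (for an appropriately isolated $\lambda$), then no amount of repetition can push $\A$'s acceptance probability to $1$, so $\val{\A} < 1$. Keeping the construction \emph{simple} (probabilities only $0$, $\nicefrac{1}{2}$, $1$) requires care: the gadget must be built so that the independent simulations and the threshold test use only fair coin flips, which typically forces $\lambda = \nicefrac{1}{2}$ and a comparison between two independent streams rather than a single run against a fixed constant.

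The main obstacle I anticipate is the \emph{converse direction} together with the simplicity constraint. Showing that value $1$ fails when all words stay below threshold is delicate because the supremum ranges over \emph{all} words, not just the repeated words $u\#u\#\cdots$; I would need to argue that the gadget structure forces every high-probability word to essentially decompose into such repeated blocks, or design $\A$ so that non-repeated inputs are harmless. Establishing the required isolation of the threshold — so that ``at most $\lambda$'' is genuinely separated from values that could be amplified to $1$ — is where the reduction must be engineered most carefully, likely by building the comparison gadget to pit two copies of $\B$ against each other so that the relevant event has a probability bounded away from $1$ uniformly over all inputs. Once these combinatorial and probabilistic estimates are in place, the equivalence ``$\exists u,\ \prob{\B}(u) > \lambda \iff \val{\A} = 1$'' transfers the undecidability of the threshold emptiness problem directly to the value $1$ problem for simple PA.
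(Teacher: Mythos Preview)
The paper does not prove this theorem at all: it is stated with a citation to~\cite{GO10} and then used as a black box, namely as the undecidable problem to which Lemma~\ref{lem:construction} reduces in order to obtain Theorem~\ref{thm:main}. There is therefore no proof in the paper to compare your proposal against.

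As for the substance of your sketch, the overall strategy --- reduce from the strict emptiness problem at threshold $\nicefrac{1}{2}$ and amplify the gap --- is indeed the Gimbert--Oualhadj approach. One caution: the first mechanism you propose, running $k$ independent copies of $\B$ on $u\#u\#\cdots\#u$ and taking a majority vote, is not how the construction goes and would be awkward to realise with a simple PA (counting and comparing successes against a threshold is not a finite-state, fair-coin operation). The actual gadget is precisely the one displayed in Figure~\ref{fig:variation} of this paper (attributed there to~\cite{GO10,FGO12}): it has value $1$ iff $x>y$, and the reduction plugs in $x=\prob{\B}(u)$ and $y=1-\prob{\B}(u)$ for the word $u$ currently under test, so that the whole automaton has value $1$ iff some $u$ satisfies $\prob{\B}(u)>\nicefrac{1}{2}$. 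This is exactly the ``comparison between two independent streams'' you arrive at by the end of your proposal; your worry about the converse direction (arbitrary words, not just the structured $(i\cdot a^n\cdot f)^m$ ones) is legitimate but is handled because the gadget's shape makes any other input pattern no better.
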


\begin{definition}[Numberless probabilistic automaton]
A \emph{numberless probabilistic automaton} (NPA) is a tuple $\A = (Q, A, q_0, T, F)$,
where $Q$ is a finite set of states, $A$ is the finite input alphabet, 
$q_0 \in Q$ is the initial state, $T \subseteq Q \times A \times Q$ is the \emph{numberless transition function},
and $F \subseteq Q$ is the set of accepting states.
\end{definition}

The numberless transition function $T$ is an abstraction of probabilistic transition functions. We say that $\Delta$ is consistent with $T$ if for all letters $a$ and states $s$ and $t$, $\Delta(s,a,t) > 0$ if, and only if $(s,a,t) \in T$.

A numberless probabilistic automaton is an equivalence class of probabilistic automata, which share the same set of states, input alphabet, initial and accepting states, and whose transition functions have the same support.

A NPA $\A = (Q, A, q_0, T, F)$ together with a probabilistic transition function $\Delta$ consistent with $T$ defines a PA $\A[\Delta] = (Q, A, q_0, \Delta, F)$. Conversely, a PA $\A = (Q, A, q_0, \Delta, F)$ induces an underlying NPA $[\A] = (Q, A, q_0, T, F)$, where $T \subseteq Q \times A \times Q$ is defined by $(q,a,p) \in T$ if $\Delta(q,a)(p) > 0$.

%

%

We consider two decision problems for NPA:
\begin{itemize}
	\item \textbf{The existential value $1$ problem:} given a NPA $\A$, determine whether
there exists $\Delta$ such that $\val{\A[\Delta]} = 1$.
	\item \textbf{The universal value $1$ problem:} given a NPA $\A$, determine whether
for all $\Delta$, we have $\val{\A[\Delta]} = 1$.
\end{itemize}

\begin{proposition}
There exists a NPA such that:
\begin{itemize}
	\item there exists $\Delta$ such that $\val{\A[\Delta]} = 1$,
	\item there exists $\Delta'$ such that $\val{\A[\Delta']} < 1$.
\end{itemize}
\end{proposition}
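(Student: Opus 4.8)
The plan is to exhibit a single NPA $\M$ whose support admits one instance of value $1$ and one instance of value strictly below $1$. The guiding observation is that such separating behaviour can only occur \emph{in the limit}. Indeed, for a fixed support, whether a given finite word $w$ is accepted with probability exactly $1$ is a property of the support alone: $\prob{}(w)=1$ holds if and only if every path labelled $w$ in the transition graph ends in an accepting state. Hence if some instance accepted a word with probability $1$, every instance would too, and no separation could arise. The separating automaton must therefore have value $1$ \emph{unattained}: a family of words whose acceptance probabilities tend to $1$, with the value of this limit sensitive to the actual transition probabilities.

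Accordingly, I would take a two-letter NPA with a single absorbing rejecting sink $\bot$ and a small transient part in which the letters $a$ and $b$ play symmetric roles: each letter pushes probability mass towards the accepting states but irrecoverably leaks to $\bot$ the mass that currently sits in the state ``misaligned'' with the letter being read. Reading a long word then amounts to steering a distribution over the transient states while continually paying a leak, and whether the accumulated leak can be driven to $0$ is governed by a drift condition on the transition probabilities. This is precisely the value-$1$-in-the-limit phenomenon underlying the undecidability in Theorem~\ref{thm:value_1}, and I would reuse a minimal automaton of that shape.

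\textbf{An instance of value $1$.} For the balanced instance $\Delta$ (all branching probabilities equal to $\tfrac12$, making the drift favourable) I would exhibit a word family $w_n = a^{k_1} b\, a^{k_2} b \cdots$ with block lengths $k_i \to \infty$, and estimate its acceptance probability as a product of per-block survival factors. The point is that for every $\varepsilon > 0$ the block lengths can be tuned so that the total leak to $\bot$ falls below $\varepsilon$; the supremum acceptance probability is then $1$, so $\val{\M[\Delta]} = 1$. \textbf{An instance of value below $1$.} For a skewed instance $\Delta'$, with branching probabilities bounded away from $\tfrac12$ so that the drift becomes unfavourable, I would argue that \emph{every} word leaks a fixed positive fraction of mass to $\bot$, whence $\val{\M[\Delta']} < 1$.

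The main obstacle is this last, universal statement: it must hold against arbitrary interleavings of $a$ and $b$, not merely against block-structured words, so a word-by-word analysis does not suffice. I would attack it with a potential (Lyapunov) function $\Phi$ on the distributions over the transient states, chosen so that reading either letter increases the mass absorbed in $\bot$ by at least a constant multiple of $\Phi$, while $\Phi$ stays bounded below along any trajectory that keeps a non-negligible fraction of mass alive; summing these increments bounds the sink mass below by a positive constant independent of the word. Finally I would verify the routine but essential point that $\Delta$ and $\Delta'$ are consistent with the \emph{same} numberless transition function, i.e.\ every transition used carries positive probability in both instances, so that they are genuinely two instances of one NPA $\M$.
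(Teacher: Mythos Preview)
Your guiding observation is correct and your high-level instinct---use an automaton whose value-$1$ status hinges on a comparison of transition probabilities, so that the value is $1$ only in the limit---is exactly the paper's. The difference is concreteness. The paper simply \emph{writes down} a six-state automaton over the alphabet $\{i,a,f\}$ with two free parameters $x,y\in(0,1)$: the letter $i$ splits the run evenly onto a left branch (self-loop probability $x$ under $a$) and a right branch (self-loop probability $y$ under $a$), and $f$ commits whichever side has survived to an accepting or a rejecting absorbing state. For $x>y$ it shows that the word $(i\,a^n\,f)^m$ is accepted with probability at least $1-\varepsilon$ once $n$ and then $m$ are chosen large enough; for $x\le y$ it asserts that no word has value above $\tfrac12$. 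Two instances of one NPA, done.

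Your proposal never fixes the automaton. The verbal specification---two letters, a single rejecting sink, ``symmetric roles'' for $a$ and $b$, a misalignment leak---does not by itself determine a working example, and it does not match the paper's, which has three letters and \emph{two} absorbing states (one accepting, one rejecting) rather than a single sink. Until the automaton is on the table, neither your block-word family for the upper bound nor your Lyapunov programme for the lower bound can be executed, so as written the argument has a gap at the very first step: exhibit $\M$. Once you do, both estimates are a few lines; the elaborate potential-function machinery is more than the paper itself supplies (it just states the $x\le y$ bound), but it is also unnecessary for this particular example, where a direct symmetry-and-monotonicity argument suffices.
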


\begin{figure}[!ht]
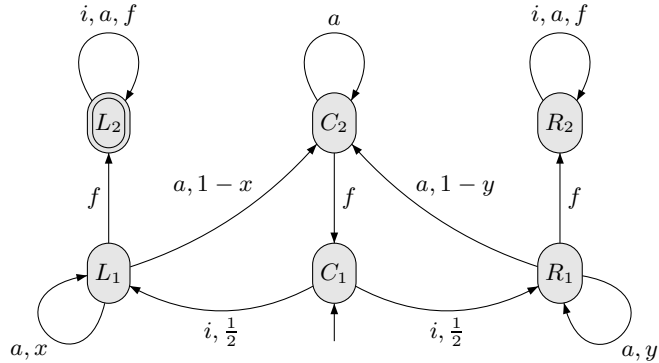

\begin{center}
\begin{gpicture}(60,40)(0,0)
	\gasset{Nadjust=w,fillgray=.9}

  	\node[Nmarks=i,iangle=-90](C1)(30,10){$C_1$}
  	\node(C2)(30,30){$C_2$}
  	\node(L1)(0,10){$L_1$}
  	\node[Nmarks=r](L2)(0,30){$L_2$}
  	\node(R1)(60,10){$R_1$}
  	\node(R2)(60,30){$R_2$}

	\drawloop(C2){$a$}

	\drawedge(C2,C1){$f$}
  	\drawedge[curvedepth=5,ELside=l](C1,L1){$i,\frac{1}{2}$}
  	\drawedge[curvedepth=-3,ELside=l](L1,C2){$a,1-x$}
  	\drawedge(L1,L2){$f$}
	\drawloop[loopangle=-135](L1){$a,x$}
	\drawloop[loopangle=90](L2){$i,a,f$}

  	\drawedge[curvedepth=-5,ELside=r](C1,R1){$i,\frac{1}{2}$}
  	\drawedge[curvedepth=3,ELside=r](R1,C2){$a,1-y$}
  	\drawedge[ELside=r](R1,R2){$f$}
	\drawloop[loopangle=-45](R1){$a,y$}
	\drawloop(R2){$i,a,f$}
\end{gpicture}
\caption{This NPA has value $1$ if and only if $x > y$}
\label{fig:variation}
\end{center}

\end{figure}

In this automaton, adapted from~\cite{GO10,FGO12}, the shortest word that can be accepted is $i \cdot f$, as $i$ goes from $C_1$ to $L_1$, and $f$ goes from $L_1$ to $L_2$. However, there are as much chances to go to $R_2$, so the value of $i\cdot f$ is $\nicefrac{1}{2}$.

If $x$ is strictly less than $y$, one can tip the scales to the left by adding $a$'s between the $i$ and $f$ : each time, the run will have more chances to stay left than to stay right. After reading $i \cdot a^n \cdot f$, the probability of reaching $L_2$ is equal to $x^n$, while the probability of reaching $R_2$ is only $y^n$. There is also a very high chance that the run went back to $C_1$, but from there we can simply repeat our word an arbitrary number of times.

Let $x$, $y$, and $\varepsilon$ be three real numbers such that $0 \le y < x \le 1$, and $0 < \varepsilon \le 1$. There is an integer $n$ such that $\nicefrac{x^n}{(x^n+y^n)}$ is greater than $1-\nicefrac{\varepsilon}{2}$, and an integer $m$ such that $\left(1-x^n-y^n\right)^m$ is less than $\nicefrac{\varepsilon}{2}$. The word $\left(i\cdot a^n \cdot f\right)^m$ is accepted with probability greater than $1-\varepsilon$.

On the other hand, if $x \le y$, there is no word with value higher than $\nicefrac{1}{2}$.


\section{Recursive inseparability for numberless value $1$ problems}
\label{sec:numberless_undecidability}
In this section, we prove the following theorem:

\begin{theorem}\label{thm:main}
The two following problems for numberless probabilistic automata
are recursively inseparable:
\begin{itemize}
	\item all instances have value $1$,
	\item no instance has value $1$.
\end{itemize}
\end{theorem}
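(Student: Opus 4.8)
The plan is to reduce the value~$1$ problem for simple PA (undecidable by Theorem~\ref{thm:value_1}) to the separation problem for NPA, via the generic construction announced in the introduction. Concretely, given a simple PA~$\A$, I would build a NPA~$\M$ for which the three conditions are equivalent: $\A$~has value~$1$; \emph{some} instance of~$\M$ has value~$1$; \emph{all} instances of~$\M$ have value~$1$. Once such a construction exists, recursive inseparability follows immediately: any recursive set separating ``all instances have value~$1$'' from ``no instance has value~$1$'' would, by pulling back along $\A \mapsto \M$, yield a recursive separator for ``$\val{\A}=1$'' versus ``$\val{\A} < 1$'' among simple PA, contradicting the undecidability of the value~$1$ problem. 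So the entire content is in the construction and the proof of the triple equivalence; the reduction wrapper around it is routine.

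For the construction itself, the guiding intuition is the gadget of Figure~\ref{fig:variation}, which has value~$1$ exactly when an unknown probability~$x$ strictly dominates another unknown probability~$y$. The idea is to take the simple automaton~$\A$, whose transitions carry the fixed value~$\nicefrac12$, and replace each such $\nicefrac12$-branching by a numberless gadget in which the two outgoing branches receive \emph{unspecified} probabilities. The point of leaving these numberless is that the acceptance behaviour I want to capture---$\val{\A}=1$---should be insensitive to the exact split, so that whatever instance~$\Delta$ is chosen, the ambient structure can, by the same trick of inserting idling letters (the $a$-loops in the figure), amplify any strict bias back towards probability~$1$ and simultaneously iterate to suppress the mass that leaks away. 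The design goal is that the numberless freedom never helps and never hurts: if $\A$ already has value~$1$ then every split still lets the amplification go through, giving (i)~$\Rightarrow$~(iii); and since (iii)~$\Rightarrow$~(ii) is trivial, I only need the converse (ii)~$\Rightarrow$~(i), showing that if some instance reaches value~$1$ then the original~$\A$ must too.

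For the equivalence proof I would argue the cycle (i)~$\Rightarrow$~(iii)~$\Rightarrow$~(ii)~$\Rightarrow$~(i). The middle implication is immediate. For (i)~$\Rightarrow$~(iii), I would fix an arbitrary instance~$\Delta$, take a word family witnessing $\val{\A}=1$ in~$\A$, and lift it to a word family for~$\M[\Delta]$ by interleaving enough copies of the idling letter and enough outer repetitions, exactly as in the quantitative estimate following Figure~\ref{fig:variation}: choose $n$ so that the biased branch wins with probability close to~$1$ conditioned on not idling, then choose $m$ so that the residual escape mass $(1-\cdots)^m$ is negligible. The delicate point is that this amplification must hold \emph{uniformly} in the instance~$\Delta$, i.e.\ for every admissible choice of the numberless probabilities, so the estimates must use only the positivity of the supports and not their magnitudes. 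For (ii)~$\Rightarrow$~(i) I would run the contrapositive: assuming $\val{\A} < 1$, there is a threshold bounding acceptance in~$\A$ away from~$1$, and I would show this bound propagates through the gadget so that \emph{no} instance of~$\M$ can exceed it, using that the gadget cannot manufacture value that the underlying~$\A$ does not already permit.

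The main obstacle I anticipate is engineering the gadget so that the two nontrivial implications hold \emph{simultaneously} and \emph{uniformly}. The tension is that to get (i)~$\Rightarrow$~(iii) the numberless freedom must be weak enough that any bias can be amplified to~$1$ regardless of the chosen split, while to get (ii)~$\Rightarrow$~(i) that same freedom must be tightly enough constrained that it cannot \emph{fabricate} value~$1$ when $\A$ has value strictly below~$1$. Achieving both at once---so that the numberless degrees of freedom are genuinely inert with respect to the value-$1$ property---is exactly what makes (i), (ii) and (iii) collapse into a single equivalence class, and getting the gadget's self-loop and escape structure right so that the amplification estimates are both available and instance-independent is where the real work lies.
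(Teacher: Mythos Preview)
Your reduction wrapper is exactly the paper's: build $\M$ from $\A$ so that
``$\val{\A}=1$'', ``some instance of $\M$ has value~$1$'' and ``all instances have value~$1$'' coincide, then pull back a hypothetical separator. That part is fine.

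The gap is in the construction. You propose to replace each $\nicefrac12$-split of~$\A$ by a numberless split and then ``amplify any strict bias back towards probability~$1$'' using idling loops in the style of Figure~\ref{fig:variation}. But the gadget of Figure~\ref{fig:variation} does the wrong thing for this purpose: it \emph{compares} two unknown biases and lets the word amplify their ratio; it does not let you recover a fair $\nicefrac12$-$\nicefrac12$ split from an unknown biased one. What you actually need is the von~Neumann fair-coin trick: from the unknown bias~$\lambda$, flip twice, send $LR$ to one successor, $RL$ to the other, and retry on $LL$/$RR$. This is precisely the paper's ``fair coin gadget'' (Figure~\ref{fig:fair_coin}), and it is what makes the outcome genuinely insensitive to~$\lambda$ rather than something you have to overcome by amplification.

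There is a second, more serious gap. Even after installing fair-coin gadgets everywhere, the resulting NPA~$\B$ has many instances beyond the family~$(\B_\lambda)_\lambda$: nothing forces the \emph{same} bias~$\lambda$ at every gadget, and an instance that assigns different biases to different gadgets need not simulate~$\A$ at all (indeed, Figure~\ref{fig:variation} is exactly such an instance of its own~$\B$, with value~$1$ while every $\B_\lambda$ has value~$\nicefrac12$). Your implication (ii)~$\Rightarrow$~(i) therefore fails for~$\B$. The paper spends Section~\ref{subsec:simulation} on a second, nontrivial construction that routes \emph{all} probabilistic choices through a single shared transition (so an instance is determined by one pair $(\lambda,\theta)$), together with a deterministic ``fairness checker'' that punishes words not of the canonical simulating form. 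Without this second step---or some equivalent mechanism collapsing the instance space to a one-parameter family---your argument for (ii)~$\Rightarrow$~(i) cannot go through.
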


Recall that two decision problems $A$ and $B$
are recursively inseparable if their languages $L_A$ and $L_B$ of accepted inputs
are disjoint and there exists no recursive language $L$
such that $L_A\subseteq L$ and $L \cap L_B = \emptyset$.

Note that it implies that both $A$ and $B$ are undecidable.

Equivalently, this means there exists no terminating algorithm which has the following behaviour on input $x$:
\begin{itemize}
	\item if $x \in L_A$, then the algorithm answers ``YES''
	\item if $x \in L_B$, then the algorithm answers ``NO''.
\end{itemize}
On an input that belongs neither to $L_A$ nor to $L_B$, the algorithm's answer can be either ``YES'' or ``NO''. 

%
%
%
%
%
%
%

\subsection{Overall construction}

\begin{lemma}\label{lem:construction}
There exists an effective construction 
which takes as input a simple PA $\A$ 
and constructs a NPA $\M$ such that
\[
\val{\A} = 1\ \Longleftrightarrow\ \forall \Delta, \val{\M[\Delta]} = 1\ \Longleftrightarrow\ \exists \Delta, \val{\M[\Delta]} = 1\ .\]
\end{lemma}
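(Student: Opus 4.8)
The plan is to build $\M$ by replacing each fair coin toss of the simple automaton $\A$ with a \emph{symmetric numberless gadget} whose effect, in the value-$1$ limit, is independent of the instantiation $\Delta$. Concretely, whenever $\A$ reads a letter from a state $s$ and branches with probability $\nicefrac{1}{2}$ to $t_1$ and $\nicefrac{1}{2}$ to $t_2$, I replace this transition in $\M$ by a gadget built from two symmetric branches that both use the \emph{same} numberless transition, so that in any instance $\M[\Delta]$ the two branches carry the same unknown probability $p \in (0,1)$. Following the comparison mechanism of Figure~\ref{fig:variation}, the word-chooser is given an auxiliary letter that can be pumped between the branching and the join; because the two sides are symmetric, pumping rebalances them to equal weight regardless of the value of $p$. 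The exact probability is thus washed out, and the gadget simulates a fair coin in the limit for every $\Delta$.

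With the construction in place, I would establish the equivalence by closing the cycle
\[
\val{\A} = 1 \ \Longrightarrow\ \forall \Delta,\ \val{\M[\Delta]} = 1 \ \Longrightarrow\ \exists \Delta,\ \val{\M[\Delta]} = 1 \ \Longrightarrow\ \val{\A} = 1 .
\]
The middle implication is immediate, since every NPA admits at least one instance (for example the uniform one on each support). The remaining two implications are the substance of the argument. For the first, I would fix an arbitrary instance $\M[\Delta]$ and an arbitrary $\varepsilon > 0$, take a word $u$ witnessing $\prob{\A}(u) > 1 - \varepsilon$ in $\A$, and lift it to a word $\widehat{u}$ of $\M$ by inserting, at each simulated toss, enough copies of the pumping letter to bring the symmetric gadget within $\varepsilon$ of a fair split. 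Since all transition probabilities of $\M[\Delta]$ are strictly positive and the branches are symmetric, the number of copies needed depends on $\Delta$ but is always finite; accumulating the per-step error over the finitely many tosses in $u$ and then pumping the whole word to absorb the residual mass (exactly as in the analysis following Figure~\ref{fig:variation}) yields $\prob{\M[\Delta]}(\widehat{u}) > 1 - \varepsilon'$, whence $\val{\M[\Delta]} = 1$.

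For the last implication I would argue the contrapositive: if $\val{\A} < 1$, i.e. $\val{\A} = 1 - \delta$ for some $\delta > 0$, I would show that every word of $\M[\Delta]$ factorises, after removing the auxiliary pumping letters, into a simulation of some word of $\A$, so that its acceptance probability is bounded by a quantity approaching $1 - \delta$; hence no word of $\M[\Delta]$ exceeds $1 - \nicefrac{\delta}{2}$, and $\val{\M[\Delta]} < 1$ for every $\Delta$.

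The main obstacle is precisely this last direction together with the universal quantifier over $\Delta$: I must show that the symmetric gadget grants the word-chooser no power beyond a fair coin, \emph{uniformly} in the unknown probabilities. This is what the symmetry is designed to buy, since any bias introduced by one branch is exactly matched by the other and a symmetric comparison never beats an even split; but making the limit bookkeeping rigorous is delicate. The two technical points I expect to dominate the proof are controlling the residual probability mass that returns to the branching state, and bounding the propagation of the per-toss error through the composition of many gadgets, both of which must be handled without any control over the exact values of the transition probabilities.
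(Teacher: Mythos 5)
Your first step is essentially the paper's first step: the paper, too, replaces each half--half transition of $\A$ by a von Neumann-style gadget with a pumpable letter $\sharp$ (Figure~\ref{fig:fair_coin}), yielding a family $(\B_\lambda)$ with $\val{\B_\lambda} = \val{\A}$ for every $\lambda$. The genuine gap is your claim that the two branches of your symmetric gadget ``carry the same unknown probability $p$'' in every instance $\M[\Delta]$. In the NPA formalism an instantiation assigns a probability \emph{independently} to each transition of the support; two branches sitting at distinct (state, letter) pairs are distinct transitions, and nothing ties their probabilities to a common parameter. The paper flags exactly this: the $\B_\lambda$'s are instances of the underlying NPA $\B$, but $\B$ also has instances where the nominally symmetric probabilities differ, and Figure~\ref{fig:variation} shows this is fatal --- an instance with $x > y$ has value $1$ while the symmetric instances have value $\nicefrac{1}{2}$. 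Consequently, with your construction the implication $\exists \Delta,\ \val{\M[\Delta]} = 1 \Rightarrow \val{\A} = 1$ fails: when $\val{\A} < 1$, an adversarial $\Delta$ can skew one branch, and the very pumping letter you introduce lets the word-chooser \emph{amplify} that skew (this is the mechanism of Figure~\ref{fig:variation}, which works against you here) up to acceptance probability arbitrarily close to $1$. You correctly identified this direction as the main obstacle, but symmetry of the \emph{support} cannot buy symmetry of the \emph{instantiated probabilities}, which is what your rebalancing argument needs.

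The missing idea is the paper's second step (Section~\ref{subsec:simulation}): build a NPA $\C$ with a \emph{single} probabilistic transition, so that any instance determines one global parameter $\lambda$ (and $\theta$), and route every simulated coin toss through it using new \emph{deterministic} letters $\chck(b,q)$, $\apply(b,q)$, $\merge$ --- deterministic transitions being the only ones whose probability is invariant under all instantiations. This creates a difficulty absent from your sketch: input words can ``cheat'' by not following the protocol (skipping a state, or applying an outcome to a pair other than the one checked), and the paper must add machinery --- the $(1-\theta)$ escape to a wait state, the $\finish$ letter, and the deterministic fairness checker $\D$ --- together with Propositions~\ref{prop:theta}, \ref{prop:cheatonce} and~\ref{prop:upper} showing that cheating caps the acceptance probability at a weighted average bounded away from $1$ unless some honestly simulated $\B_\lambda$-word already has value close to $1$. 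A consequence invisible from the symmetric-gadget viewpoint: the simulation preserves only the value-$1$ property, not the value itself, which is precisely why Lemma~\ref{lem:construction} is stated for value $1$ and not as a value-preserving reduction.
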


We first explain how Lemma~\ref{lem:construction} implies Theorem~\ref{thm:main}.
Assume towards contradiction that the problems ``all instances have value $1$'' and ``no instance has value $1$''
are recursively separable.
Then there exists an algorithm $A$ taking a NPA as input and such that:
if all instances have value $1$, then it answers ``YES'', and if no instance has value $1$, then it answers ``NO''.
We show using Lemma~\ref{lem:construction} 
that this would imply that the value $1$ problem is decidable for simple PA, contradicting Theorem~\ref{thm:value_1}.
Indeed, let $\A$ be a simple PA, applying the construction yields a NPA $\M$ such that
\[
\val{\A} = 1\ \Longleftrightarrow\ \forall \Delta, \val{\M[\Delta]} = 1\ \Longleftrightarrow\ \exists \Delta, \val{\M[\Delta]} = 1\ .\]
In particular, either all instances of $\M$ have value $1$, or no instance of $\M$ has value $1$.
Hence, if it answers ``YES'' then $\val{\A} = 1$ and it if answers ``NO'' then $\val{\A} < 1$,
allowing to decide whether $\A$ has value $1$ or not.
This concludes the proof of Theorem~\ref{thm:main} assuming Lemma~\ref{lem:construction}.

\vskip1em

The construction follows two steps.

The first step is to build from $\A$ a family of PA's $\B_\lambda$ whose transitions are all of the form $q \xrightarrow{a} (\lambda \cdot r, (1-\lambda) \cdot s)$ in such a way that $\B_\lambda$ has the same value as $\A$ \emph{for any value of $\lambda$}. Note that, while all the $\B_\lambda$'s belong to the same NPA, they are \emph{not}, in general, a NPA: for example, if $\A$ were the simple version of the  automaton of Figure~\ref{fig:variation}, the $B_\lambda$'s would be the instances where $x = y = \lambda$, while the underlying NPA would also include the cases where $x \neq y$.

The second step is to build from the $\B_\lambda$'s a NPA $\M$ such that, for each probabilistic transition function $\Delta$, there is a $\lambda$ such that $\M[\Delta]$ has value 1 if, and only if $\B_\lambda$ has value 1.

It follows that: $$\begin{array}{rcl}
					\exists \Delta, \val{\M[\Delta]} = 1	&	\Longrightarrow	&	\exists \lambda, \val{\B_\lambda} = 1\\
										&	\Longrightarrow	&	\val{\A} = 1\\
										&	\Longrightarrow &	\forall \lambda, \val{\B_\lambda} = 1\\
										&	\Longrightarrow &	\forall \Delta, \val{\M[\Delta]} = 1\ .
		\end{array}$$

%
%
%
%


\subsection{The fair coin construction}
\label{subsec:fair_coin}
Let $\A = (Q, A, q_0, \Delta, F)$ be a simple PA over the alphabet $A$. We construct a family of PAs $(\B_\lambda)_{\lambda\in]0,1[}$ over the alphabet $B = A\cup\{\sharp\}$, whose transitions have probabilities $0$, $\lambda$, $1 - \lambda$ or $1$, as follows.

The automaton $\B_\lambda$ is a copy of $\A$ where each transition of $\A$ is replaced by the gadget illustrated in Figure~\ref{fig:fair_coin} (for simplicity, we assume that all the transitions of $\A$ are probabilistic). The initial and final states are the same in $\A$ and $\B_\lambda$.

The left hand side shows part of automaton $\A$: a probabilistic transition from $q$ reading $a$, leading to $r$ or $s$ each with probability half.
The right hand side shows how this behaviour is simulated by $\B_\lambda$: the letter $a$ leads to an intermediate state $q_a$, from which we can read a new letter $\sharp$. Each time a pair of $\sharp$'s is read, the automaton $B_\lambda$ goes to $r$ with probability $\lambda \cdot (1-\lambda)$, goes to $s$ with probability $(1-\lambda) \cdot \lambda$, and stays in $q_a$ with probability $\lambda^2 + (1-\lambda)^2$. Reading a letter other than $\sharp$ while the automata is still in one of the new states leads to the new sink state $\bot$, which is not accepting.
Thus, the probability of going to $r$ is equal to the probability of going to $s$, and we can make the probability of a simulation failure as low as we want by increasing the number of $\sharp$'s between two successive ``real'' letters.

%

\begin{figure}
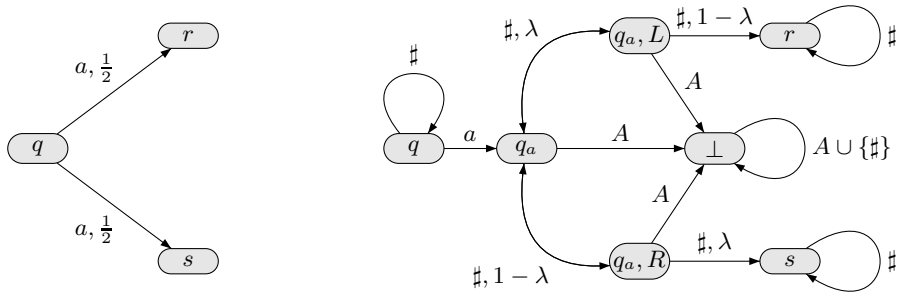

\begin{center}
\begin{gpicture}(110,33)(0,0)
	\gasset{Nadjust=h,fillgray=0.9}

  	\node(q)(0,15){$q$}
  	\node(q0)(20,30){$r$}
  	\node(q1)(20,0){$s$}

  	\drawedge(q,q0){$a,\frac{1}{2}$}
  	\drawedge[ELside=r](q,q1){$a,\frac{1}{2}$}

  	\node(qr)(50,15){$q$}
  	\node(qar)(65,15){$q_a$}
  	\node(qaL)(80,30){$q_a,L$}
  	\node(qaR)(80,0){$q_a,R$}
  	\node(q0r)(100,30){$r$}
  	\node(q1r)(100,0){$s$}
  	\node(bot)(90,15){$\bot$}

  	\drawloop[loopangle=90](qr){$\sharp$}
	\drawedge(qr,qar){$a$}
	\drawedge[curvedepth=8](qar,qaL){$\sharp,\lambda$}
	\drawedge[curvedepth=-8](qaL,qar){}
	\drawedge[curvedepth=-8,ELside=r](qar,qaR){$\sharp,1 - \lambda$}
	\drawedge[ELside=r,curvedepth=8](qaR,qar){}
  	\drawedge(qaL,q0r){$\sharp,1 - \lambda$}
  	\drawedge(qaR,q1r){$\sharp,\lambda$}
  	\drawloop[loopangle=0](q0r){$\sharp$}
  	\drawloop[loopangle=0](q1r){$\sharp$}
  	
  	\drawedge(qar,bot){$A$}
  	\drawedge(qaL,bot){$A$}
  	\drawedge(qaR,bot){$A$}
	\drawloop[loopangle=0](bot){$A\cup\{\sharp\}$}

\end{gpicture}
\end{center}
\caption{The fair coin gadget.}
\label{fig:fair_coin}
\end{figure}

	Let $u$ be a word of $A^*$. We denote by $[u]^k$ the word of $B^*$ where each letter $a \in A$ of $u$ is replaced by $a\cdot \sharp^{2k}$. Conversely, if $w$ is a word of $B^*$, we denote by $\tilde{w}$ the word obtained from $w$ by removing all occurrences of the letter $\sharp$.

	Intuitively, a run of $\A$ on the word $u$ is simulated by a run of $\B_\lambda$ on the word $[u]^k$. Whenever there is a transition in $\A$, $\B_\lambda$ makes $k$ attempts to simulate it through the gadget of Figure~\ref{fig:fair_coin}, and each attempt succeeds with probability $\left(1 - 2 \lambda \cdot (1 - \lambda) \right)$, so each transition fails with probability:
\[
A_{\lambda,k} = 1 - \left(1 - 2 \lambda \cdot (1 - \lambda) \right)^k\ .
\]

%
%
 
\begin{proposition}
\label{prop:fair_coin}
The probabilistic automaton $\B_\lambda$ satisfies:
\begin{enumerate}
	\item For all $q,r \in Q$, $a \in A$, and $k \in \mathbb{N}$, $\prob{\B_\lambda}(q \xrightarrow{[a]^k} r) = A_{\lambda,k} \cdot \prob{\A}(q \xrightarrow{a} r)\ ,$
	\item For all $q,r \in Q$, $u \in A^*$, and $k \in \mathbb{N}$, $\prob{\B_\lambda}(q \xrightarrow{[u]^k} r) = A_{\lambda,k}^{|u|} \cdot \prob{\A}(q \xrightarrow{u} r)\ .$
	\item For all $q,r \in Q$, $w \in (A\cup\{\sharp\})^*$, and $k \in \mathbb{N}$, $\prob{\B_\lambda}(q \xrightarrow{w} r) \le \prob{\A}(q \xrightarrow{\tilde{w}} r)\ .$
\end{enumerate}
\end{proposition}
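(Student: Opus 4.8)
The plan is to treat the three statements separately: (1) and (2) are exact identities proved by a direct computation followed by a Markov-chain induction, while (3) is an inequality proved by a block-by-block induction on the word. The single guiding principle throughout is that, in the gadget of Figure~\ref{fig:fair_coin}, every real state carries a $\sharp$-self-loop while every auxiliary state ($q_a$, $(q_a,L)$, $(q_a,R)$) sends every letter of $A$ to the sink $\bot$; consequently any probability mass that has not yet ``committed'' to a real target is irrevocably lost as soon as the next letter of $A$ is read.

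For (1), I would first observe that reading $a$ from $q$ moves all the mass to $q_a$, and that one pair of $\sharp$'s leaves $q_a$ for $r$ with probability $\lambda(1-\lambda)$, for $s$ with probability $(1-\lambda)\lambda$, and stays in $q_a$ with probability $\lambda^2+(1-\lambda)^2 = 1 - 2\lambda(1-\lambda)$. After $\sharp^{2k}$ the mass remaining in $q_a$ is therefore $(1-2\lambda(1-\lambda))^k$, so the escape probability is exactly $A_{\lambda,k}$, split evenly between the two targets by symmetry. Since $\A$ is simple this gives $\prob{\B_\lambda}(q \xrightarrow{[a]^k} r) = \tfrac12 A_{\lambda,k} = A_{\lambda,k}\cdot\prob{\A}(q \xrightarrow{a} r)$ for the two targets, and $0$ for every other $r$. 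For (2) I would induct on $|u|$, writing $[ua]^k = [u]^k\,[a]^k$ and decomposing over the state reached after $[u]^k$. The key point is that after an even number $2k$ of $\sharp$'s one is either in a real state or back in the failure state $q_a$ (never in a $(q_a,L)$ or $(q_a,R)$ state), and reading the next real letter $a$ sends both $q_a$ and $\bot$ to $\bot$; hence only the real intermediate states survive, and applying (1) to each factor yields the product $A_{\lambda,k}^{|u|+1}\cdot \prob{\A}(q\xrightarrow{ua} r)$.

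For (3) I would first isolate a single-block estimate: for any real states $p,r$, any $a\in A$ and any $m\ge 0$, $\prob{\B_\lambda}(p \xrightarrow{a\sharp^m} r) \le \prob{\A}(p \xrightarrow{a} r)$. This follows from the same computation as (1) together with a parity check: after $\sharp^{2k}$ the mass in each target is $\tfrac12 A_{\lambda,k}\le\tfrac12$, and an additional $\sharp$ only shuffles the failure mass between $q_a$ and $(q_a,L)$, $(q_a,R)$ without creating new real mass, so the bound $\le\tfrac12 = \prob{\A}(p\xrightarrow{a} r)$ persists. Writing an arbitrary $w$ uniquely as $\sharp^{i_0} a_1 \sharp^{i_1}\cdots a_n \sharp^{i_n}$ (so that $\tilde w = a_1\cdots a_n$), I would then induct on $n$ with the hypothesis that, for every real state $p$, the $\B_\lambda$-mass on $p$ after the prefix through $a_j\sharp^{i_j}$ is at most $\prob{\A}(q\xrightarrow{a_1\cdots a_j} p)$. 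The base case handles the leading $\sharp^{i_0}$ via the real-state self-loops. In the inductive step, all non-real mass is killed upon reading $a_{j+1}$, so only real-to-real transitions contribute; combining the induction hypothesis with the single-block estimate and using that all quantities are nonnegative gives the inequality at level $j+1$, and level $n$ is exactly statement (3).

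The main obstacle is not any single calculation but the careful bookkeeping of the auxiliary states. One must argue precisely that after an even number of $\sharp$'s the run is never caught in a $(q_a,L)$ or $(q_a,R)$ state, and that failure or partial mass is always absorbed into $\bot$ by the next $A$-letter. This is simultaneously what forces (3) to be an inequality rather than an equality — the mass lost to $\bot$ in $\B_\lambda$ is exactly the mass that $\A$ would keep re-randomising — and what makes the restriction of all the sums to real states legitimate.
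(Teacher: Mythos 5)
Your proof is correct, and it follows exactly the route the paper intends: the paper states Proposition~\ref{prop:fair_coin} without a detailed proof, relying on the preceding computation of the per-attempt success probability $2\lambda(1-\lambda)$ and the resulting quantity $A_{\lambda,k}$, which is precisely your single-gadget calculation for~(1), extended by the Markov decomposition for~(2) and the mass-absorption argument into $\bot$ for~(3). Your bookkeeping of the auxiliary states (parity of $\sharp$'s, loss of uncommitted mass on the next $A$-letter, monotone bound $\tfrac{1}{2}A_{\lambda,k}\le\tfrac{1}{2}$ for odd-length $\sharp$-blocks) is exactly the verification the paper leaves implicit, so there is nothing to add.
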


It follows from Proposition~\ref{prop:fair_coin} that for any $\lambda$, the value of $\B_\lambda$ is equal to the value of $\A$.


\subsection{The simulation construction}
\label{subsec:simulation}
All the $\B_\lambda$'s induce the same NPA, that we denote by $\B$. The problem is that there are many other instances of $\B$, whose values may be higher than the value of $\A$ (recall the example of Figure~\ref{fig:variation}, where the $B_\lambda$ have value $\nicefrac{1}{2}$, while there are instances of $\B$ with value $1$). In this subsection, we construct a NPA $\C$ (over an extended alphabet $C$) whose instances simulate all the $\B_\lambda$'s, but only them.

The idea is that the new NPA should only have \emph{one} probabilistic transition. An instance of this transition translates to a value for $\lambda$. Figure~\ref{fig:naive} describes a first attempt at this (notice that our convention is that an non-drawn transition means a loop, rather than a transition to a sink state).


\begin{figure}
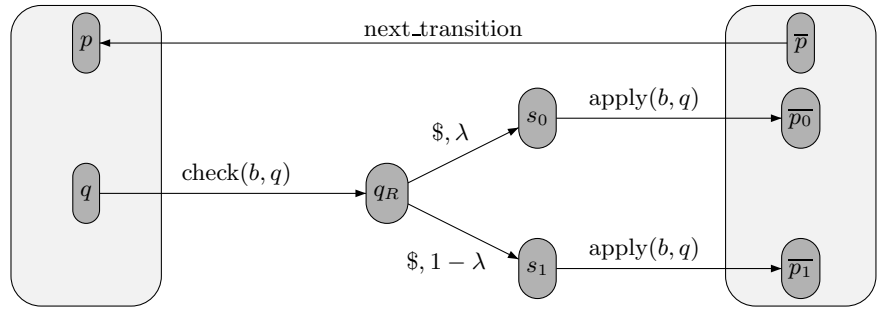

\begin{center}
\begin{gpicture}(115,40)(-10,45)
  	\node[Nw=20,Nh=40,fillgray=0.95](gauche)(0,65){}

  	\node[Nw=20,Nh=40,fillgray=0.95](droite)(95,65){}

%
	\gasset{Nadjust=w,fillgray=0.7}

  	\node(p)(0,80){$p$}
  	\node(q)(0,60){$q$}

  	\node(triangle)(40,60){$q_R$}

  	\node(s0)(60,70){$s_0$}
  	\node(s1)(60,50){$s_1$}

  	\node(op)(95,80){$\overline{p}$}
  	\node(p0)(95,70){$\overline{p_0}$}
  	\node(p1)(95,50){$\overline{p_1}$}


  	\drawedge[ELside=r](op,p){$\merge$}

  	\drawedge(q,triangle){$\chck(b,q)$}

  	
  	\drawedge(triangle,s0){$\$,\lambda$}
  	\drawedge[ELside=r](triangle,s1){$\$,1-\lambda$}


  	\drawedge(s0,p0){$\apply(b,q)\qquad $}
  	\drawedge(s1,p1){$\apply(b,q)\qquad $}

\end{gpicture}
\end{center}
\caption{Naive fusion of the probabilistic transitions.}
\label{fig:naive}
\end{figure}

In this automaton, that we call $\B'$, there are two copies of the set of states, and a single shared probabilistic transition $q_R \xrightarrow{\$} (\lambda \cdot s_0, (1-\lambda) \cdot s_1)$ between them. In order to make all the probabilistic transitions of $\B$ happen in this center area, we use new letters to detect where the runs come from before the probabilistic transition, and where it should go afterwards.

For each pair of a letter $b$ in $B$ and a state $q$ in $Q$, we introduce two new letters $\chck(b,q)$ and $\apply(b,q)$. The letter $\chck(b,q)$ loops over each state except the left copy of $q$, which goes to $q_R$. The letter $\apply(b,q)$ loops over each state except $s_0$, from where it goes to the $\lambda$-valued successor of $(q,b)$ in $\B_\lambda$, and $s_1$, from where it goes to the $(1-\lambda)$-valued successor of $(q,b)$ in $\B_\lambda$. The new letter $\merge$ sends the run back to the left part once each possible state has been tested.

Thus, if we define the morphism $\widehat{\_}$ by its action on letters:
\[
\widehat{b} = \chck(b,q_0) \cdot \$ \cdot \apply(b,q_0)\ \cdots\ \chck(b,q_{n-1}) \cdot \$ \cdot \apply(b,q_{n-1}) \cdot \merge\ ,
\]
where the $q_i$'s are the states of $\B_\lambda$, we get for any word $u$ on the alphabet $B$, $\prob{\B_\lambda}(u) = \prob{\B'_\lambda}(\widehat{u})$.

The problem with this automaton is that one can ``cheat'', either by not testing an unwelcome state, or by changing state and letter between a \chck\ and the subsequent \apply. In order to avoid this kind of behaviour, we change the automaton so that it is necessary to win an arbitrarily large number of successive times in order to approach the value 1, and we can test whether the word is fair after the first successful attempt. A side effect is that the simulation only works for the value 1: for other values, it might be better to cheat. The resulting automaton is described in Figure~\ref{fig:simulation}.

\begin{figure}
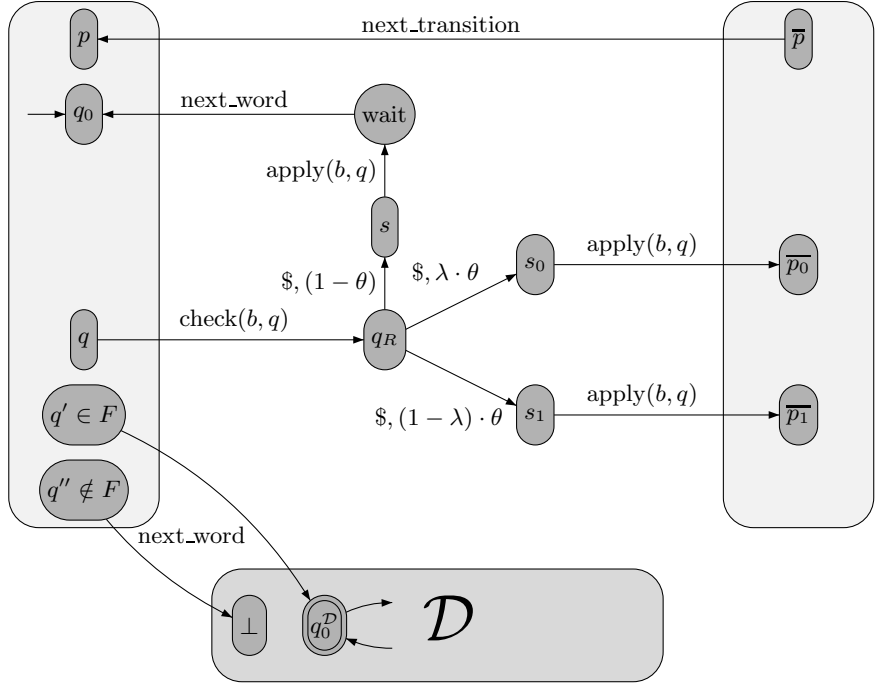

\begin{center}
\begin{gpicture}(90,80)(0,0)
  	\node[Nw=20,Nh=70,fillgray=0.95](gauche)(0,50){}

  	\node[Nw=20,Nh=70,fillgray=0.95](droite)(95,50){}

  	\node[Nw=60,Nh=15,fillgray=0.85](bas)(47,2){}
	\put(45,0){\begin{Huge}$\D$\end{Huge}}

	\gasset{Nadjust=w,fillgray=0.7}

  	\node(p)(0,80){$p$}
  	\node[Nmarks=i,iangle=180](q_0)(0,70){$q_0$}
  	\node(q)(0,40){$q$}
  	\node(q')(0,30){$q' \in F$}
  	\node(q'')(0,20){$q'' \notin F$}

  	\node(w)(40,70){$\wait$}
  	\node(s)(40,55){$s$}
  	\node(triangle)(40,40){$q_R$}

  	\node(s0)(60,50){$s_0$}
  	\node(s1)(60,30){$s_1$}

  	\node(op)(95,80){$\overline{p}$}
  	\node(p0)(95,50){$\overline{p_0}$}
  	\node(p1)(95,30){$\overline{p_1}$}

	\node(bot)(22,2){$\bot$}
  	\node[Nmarks=r](r0)(32,2){$q_0^\D$}

  	\node[Nframe=n,fillgray=.85](toto)(42,5){}
	\node[Nframe=n,fillgray=.85](tata)(42,-1){}
	\drawedge[curvedepth=1](r0,toto){}
	\drawedge[curvedepth=1](tata,r0){}

  	\drawedge[ELside=r](op,p){$\merge$}

  	\drawedge(q,triangle){$\chck(b,q)$}

  	\drawedge[ELside=r](w,q_0){$\finish$}
  	\drawedge[curvedepth=4,ELside=r](q',r0){$\finish$}
	\drawedge[curvedepth=-2](q'',bot){}
  	
  	\drawedge(triangle,s0){$\$, {\lambda}\cdot{\theta}$}
  	\drawedge[ELside=r](triangle,s1){${\$,(1-\lambda)}\cdot{\theta}$}
  	\drawedge(triangle,s){$\$,(1-\theta)$}


  	\drawedge(s0,p0){$\apply(b,q)\qquad $}
  	\drawedge(s1,p1){$\apply(b,q)\qquad $}

  	\drawedge[ELside=l](s,w){$\apply(b,q)$}
\end{gpicture}
\end{center}
\caption{The Numberless Probabilistic Automaton $\C$.}
\label{fig:simulation}
\end{figure}

The structure of the automaton of Figure~\ref{fig:naive} is still there, but it has been augmented with an extra layer of scrutiny: each time we use the probabilistic transition, there is now a positive probability $(1-\theta)$ to go to a new wait state. There is also a new letter $\finish$ which has the following effect:
\begin{itemize}
	\item if the run is in an accepting state, it goes to the initial state of the fairness checker $\D$;
	\item if the run is in a non-accepting state, it goes to the non-accepting sink state $\bot$ of $\D$; 
	\item if the run is in the wait state, it goes back to the initial state.
\end{itemize}

The fairness checker $\D$ is a deterministic automaton which accepts the language $\set{\widehat{u} \cdot \finish \mid u \in B^*}^*$. Its final state is the only final state in all of $\C$.

Intuitively, a run can still cheat before the \emph{first} \finish\ letter, but the benefits of doing so are limited: the probability that $\C[\lambda,\theta]$ accepts a word at that point is at most $\theta$ (except if the empty word is accepting, but that case is trivial). After that point, cheating is risky: if the run already reached $\D$, a false move will send it to the sink.

A more formal proof follows. A simple inspection of the construction of $\C$ yields Proposition~\ref{prop:lower}:

\begin{proposition}
\label{prop:lower}
	Let $u$ be a word of $B^*$ of length $k$. We have:
	\begin{eqnarray*}
		\prob{\C[\lambda,\theta]}(\widehat{u}\cdot\finish) & = & \theta^k\cdot\prob{\B_\lambda}(u)\\
		\prob{\C[\lambda,\theta]}((\widehat{u}\cdot\finish)^\ell) & = & (1-(1-\theta^k)^\ell)\cdot\prob{\B_\lambda}(u) \enspace.
	\end{eqnarray*}
\end{proposition}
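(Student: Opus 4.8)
The plan is to prove the two stated identities by analyzing the structure of $\C[\lambda,\theta]$ one ``block'' at a time. Each application of the morphism $\widehat{b}$ corresponds to processing one letter $b$ of the original word $u$ through the center gadget, so reading $\widehat{u}$ for a word $u$ of length $k$ amounts to running the simulation through $k$ successful passes of the probabilistic transition $q_R \xrightarrow{\$} (\lambda\theta\cdot s_0,\, (1-\lambda)\theta\cdot s_1,\, (1-\theta)\cdot s)$. The key observation is that a run contributes to reaching an accepting state after $\widehat{u}\cdot\finish$ only if it never falls into the wait branch (probability $1-\theta$ each time) and never visits the sink $\bot$; conditioned on avoiding the wait branch at each of the $k$ blocks, the center transition behaves exactly like the transition $q_R \xrightarrow{\$} (\lambda\cdot s_0,\,(1-\lambda)\cdot s_1)$ of $\B'_\lambda$ from Figure~\ref{fig:naive}, which in turn faithfully simulates $\B_\lambda$ via the identity $\prob{\B_\lambda}(u) = \prob{\B'_\lambda}(\widehat{u})$ already noted in the text.

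First I would establish the single-block case. Reading $\widehat{b}$ requires passing through the probabilistic transition once for each of the $n$ states $q_i$, but in fact only one $\chck(b,q_i)$ successfully moves the run to $q_R$ (the one matching the current left-copy state), so each block uses the transition effectively once; each use independently avoids the wait state with probability $\theta$. I would then iterate: for a word $u$ of length $k$, the morphic image $\widehat{u}$ triggers exactly $k$ such independent trials, each succeeding with probability $\theta$, and the surviving mass splits between $s_0$ and $s_1$ in the ratio $\lambda : (1-\lambda)$, precisely reproducing the $\B_\lambda$ computation. Multiplying the per-block survival probability gives the factor $\theta^k$, and the remaining distribution over $Q$ is exactly $\prob{\B_\lambda}(u)$; applying the final $\finish$ letter from an accepting state routes the run into $\D$ and reaches the unique accepting state, yielding $\prob{\C[\lambda,\theta]}(\widehat{u}\cdot\finish) = \theta^k\cdot\prob{\B_\lambda}(u)$.

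For the second identity I would argue that the $\ell$ repetitions of $\widehat{u}\cdot\finish$ behave as $\ell$ independent Bernoulli trials. Each copy of $\widehat{u}\cdot\finish$ succeeds (reaches $\D$'s accepting state, committing the run) with probability $\theta^k\cdot\prob{\B_\lambda}(u)$; crucially, the $\finish$ letter resets an unsuccessful-but-waiting run back to the initial state $q_0$, so a failure in one block does not poison the subsequent attempts, and once a run has committed inside $\D$ it stays accepted. Hence the probability of at least one success in $\ell$ attempts is the complementary-probability expression $1-(1-\theta^k\cdot\prob{\B_\lambda}(u))^\ell$. I would reconcile this with the stated form $(1-(1-\theta^k)^\ell)\cdot\prob{\B_\lambda}(u)$ by noting the decomposition exploited in the construction: the mass $\prob{\B_\lambda}(u)$ reaching an accepting configuration is fixed at the first $\finish$, and it is only the geometric wait-state recycling with per-trial survival $\theta^k$ that is repeated, so the accepting probability factors as $\prob{\B_\lambda}(u)$ times the geometric accumulation $1-(1-\theta^k)^\ell$.

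The main obstacle I anticipate is the bookkeeping around the wait state and the fairness checker $\D$: I must argue carefully that a run which is sent to $\wait$ and recycled via $\finish$ re-enters the computation in a ``clean'' state equivalent to the start, so that repetitions genuinely compose multiplicatively, and simultaneously that no alternative ``cheating'' path (skipping a $\chck$, or mismatching the $\apply$) can contribute additional accepting mass to the specific words $\widehat{u}\cdot\finish$ under consideration. Verifying that $\D$ exactly accepts $\set{\widehat{u}\cdot\finish \mid u\in B^*}^*$ and that its unique accepting state is absorbing is what rules out spurious contributions and pins down the clean factorization; this determinism of $\D$, combined with the wait-state resetting, is precisely what separates the two stated formulas and is the crux of the argument.
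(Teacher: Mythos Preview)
The paper itself offers no proof beyond ``a simple inspection of the construction of $\C$'', so your write-up is more detailed than what appears there. Your argument for the first identity is correct: each block $\widehat{b}$ routes the run through the probabilistic $\$$-transition exactly once (via the unique $\chck$ matching the current left-copy state), the $(1-\theta)$ escape is independent across the $k$ blocks, and conditioned on never escaping the $s_0/s_1$ split reproduces $\B_\lambda$ exactly.

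Your argument for the second identity, however, contains a real error. The $\ell$ repetitions of $\widehat{u}\cdot\finish$ are \emph{not} independent Bernoulli trials with success probability $\theta^k\cdot\prob{\B_\lambda}(u)$: starting from $q_0$, one block has \emph{three} outcomes, not two. With probability $1-\theta^k$ the run passes through $\wait$ and is reset to $q_0$ (a genuine retry); with probability $\theta^k\cdot\prob{\B_\lambda}(u)$ it reaches $q_0^\D$ (success); and with probability $\theta^k\cdot(1-\prob{\B_\lambda}(u))$ it reaches the absorbing sink $\bot$ (permanent failure, not a retry). Your Bernoulli model silently treats this last outcome as another retry, and the resulting expression $1-(1-\theta^k\cdot\prob{\B_\lambda}(u))^\ell$ is in general strictly larger than the stated value (take $\ell=2$, $\theta^k=\tfrac12$, $\prob{\B_\lambda}(u)=\tfrac12$: your formula gives $\tfrac{7}{16}$, the correct value is $\tfrac{3}{8}$). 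The ``reconciliation'' paragraph cannot succeed because the two expressions you are trying to identify are simply unequal.

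The clean derivation is to condition on the first block at which the run leaves $q_0$. The run remains in $q_0$ after a block with probability $1-\theta^k$, so after $\ell$ blocks it has left $q_0$ with probability $1-(1-\theta^k)^\ell$; and \emph{conditioned on leaving}, it lands in $q_0^\D$ rather than $\bot$ with probability exactly $\prob{\B_\lambda}(u)$, independently of when the exit occurs. The product is $(1-(1-\theta^k)^\ell)\cdot\prob{\B_\lambda}(u)$, as claimed.
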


It follows from Proposition~\ref{prop:lower} that the value of $\C[\lambda,\theta]$ is at least the value of $\B_\lambda$.

\begin{proposition}
\label{prop:theta}
	Let $u$ be a word of $(C\setminus\{\finish\})^+$. We have:$$\prob{\C[\lambda,\theta]}(u) \le \theta\ .$$
\end{proposition}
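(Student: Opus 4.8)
The plan is to track the distribution of the run of $\C[\lambda,\theta]$ while reading $u$ and to exploit the fact that, in the absence of the letter $\finish$, the only probabilistic choice of the automaton is the $\$$-transition out of $q_R$, and that this transition irreversibly leaks a $(1-\theta)$ fraction of its mass into a region from which acceptance becomes impossible. First I would record two structural observations. (i) Since $u$ contains no $\finish$, both the sink $\bot$ and the wait state are absorbing along the whole run: every edge leaving $\wait$, as well as every edge entering the accepting component $\D$, is labelled $\finish$. (ii) The unique probabilistic transition of $\C$ is $q_R \xrightarrow{\$} (\lambda\theta\cdot s_0,\ (1-\lambda)\theta\cdot s_1,\ (1-\theta)\cdot s)$, and the third branch leads, on the next $\apply$ letter, to $\wait$; all remaining transitions (the letters $\chck(b,q)$, $\apply(b,q)$, $\merge$, and the implicit self-loops) are deterministic.

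The core step is to show that, apart from the trivial case in which the initial state is already accepting (the empty-word case mentioned above), every run that reaches an accepting configuration must traverse the productive branch of the $\$$-transition at least once. Indeed, starting from $q_0$, the only way to reach any other state of the left copy — and in particular any accepting state — is to complete a full round $q \xrightarrow{\chck(b,q)} q_R \xrightarrow{\$} s_{0}/s_{1} \xrightarrow{\apply(b,q)} \overline{p_0}/\overline{p_1} \xrightarrow{\merge} \cdot$; every deterministic letter either advances this round or leaves the run idling through a self-loop, so no new accepting state is reached before the run has read $\$$ from $q_R$ along the productive branch. Consequently, if I condition on the first moment the run reads $\$$ while sitting in $q_R$, the productive branches carry total probability exactly $\lambda\theta + (1-\lambda)\theta = \theta$, whereas the remaining $1-\theta$ is sent to $s$ and hence to the absorbing wait state. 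This yields $\prob{\C[\lambda,\theta]}(u) \le \theta$.

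The main obstacle is to rule out the various ways a run could try to ``cheat'' and thereby reach an accepting state without paying the $\theta$ factor: reading $\chck(b,q)$ or $\apply(b,q)$ for a state or letter other than the intended one, or interleaving the special letters out of the order prescribed by the morphism $\widehat{\cdot}$. I would handle this by checking, case by case on the current state and the letter read, that such moves only keep the run within the current round (via the loop convention) or push it toward $\bot$ or $\wait$, and in particular can never produce a path from $q_0$ to a fresh accepting state that bypasses a productive $\$$-step. A clean way to organise this is to assign to each state a rank measuring its progress within a round and to verify that leaving $q_0$ for a genuinely new (hence possibly accepting) state forces the run through $q_R$ and a productive $\$$; the $(1-\theta)$ leak into $\wait$ at that step then caps the acceptance probability at $\theta$, the empty-word case $q_0\in F$ being immediate.
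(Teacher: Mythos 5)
Your proof is correct and is essentially the paper's own argument: the paper states this proposition without a formal proof, relying on exactly the inspection you carry out — the unique probabilistic transition $q_R \xrightarrow{\$} (\lambda\theta\cdot s_0,\ (1-\lambda)\theta\cdot s_1,\ (1-\theta)\cdot s)$ leaks mass $1-\theta$ into $s$ and thence into $\wait$, which is absorbing in the absence of $\finish$, so by conditioning on the first $\$$-step any run that ever leaves the initial round (in particular any run reaching a state relevant for acceptance) carries total probability at most $\theta$, the case $q_0 \in F$ being set aside as trivial just as in the paper's parenthetical remark. One fine point: since the only accepting state of $\C$ is $q_0^\D$ and, as your observation (i) notes, every edge into $\D$ is labelled $\finish$, the statement taken literally is trivial ($\prob{\C[\lambda,\theta]}(u)=0$ for $\finish$-free $u$); what your argument actually establishes is the stronger bound $\prob{\C[\lambda,\theta]}(q_0 \xrightarrow{u} F) \le \theta$, equivalently $\prob{\C[\lambda,\theta]}(u\cdot\finish)\le\theta$, which is the intended content and the form needed to conclude $k>1$ in the proof of Proposition~\ref{prop:upper}.
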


Proposition~\ref{prop:cheatonce} formalises the fact that there is no point in cheating after the first \finish\ letter:

\begin{proposition}
\label{prop:cheatonce}
	Let $u_1, \ldots, u_k$ be $k$ words of $(C\setminus\{\finish\})^*$ and $w$ be the word $u_1\cdot\finish\cdots u_k\cdot\finish$. Then, for any $1 \le i \le k$, if $u_i \notin \widehat{B^*}$, we have: $$\prob{\C[\lambda,\theta]}(w) \le \prob{\C[\lambda,\theta]}(u_i\cdot\finish\cdots u_k\cdot\finish)\ .$$
\end{proposition}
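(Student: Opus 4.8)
\textbf{Proof proposal for Proposition~\ref{prop:cheatonce}.}

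The plan is to prove the inequality by analysing where the run can be located in $\C[\lambda,\theta]$ immediately before reading the block $u_i \cdot \finish$, and to argue that a ``cheating'' block $u_i \notin \widehat{B^*}$ cannot increase the acceptance probability. The key structural fact I would exploit is that once the run has entered the fairness checker $\D$ (which happens as soon as the first $\finish$ is read from an accepting state), the automaton $\D$ is deterministic and accepts only words of the form $\set{\widehat{u}\cdot\finish \mid u \in B^*}^*$. Therefore, within $\D$, reading a block $u_i$ that is not in $\widehat{B^*}$ forces the deterministic run to leave the accepting language: it is sent to the non-accepting sink $\bot$ of $\D$, from which acceptance probability is $0$. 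So on the part of the probability mass that has already reached $\D$ before block $i$, inserting the cheating block $u_i\cdot\finish$ only destroys mass, and this can only decrease the acceptance probability.

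The main steps, in order, would be as follows. First I would fix the index $i$ and decompose the acceptance probability $\prob{\C[\lambda,\theta]}(w)$ according to the distribution of states reached after reading the prefix $u_1\cdot\finish\cdots u_{i-1}\cdot\finish$; call this distribution $\mu$. Second, I would split $\supp(\mu)$ into the states lying inside the fairness checker $\D$ (the accepting part, i.e.\ states reachable along $\widehat{B^*}\cdot\finish$ blocks) and all other states. For mass inside $\D$, I would use determinism of $\D$ together with $u_i \notin \widehat{B^*}$ to conclude that reading $u_i\cdot\finish$ sends this mass to $\bot$, contributing nothing to the final acceptance; hence this mass is simply lost compared to skipping the block. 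Third, for mass outside $\D$, I would argue that reading $u_i \cdot \finish$ versus not reading it leads to the same reachable configurations for the purpose of eventual acceptance, because the only way to reach the unique accepting state (the final state of $\D$) is through a correctly-formed sequence of $\widehat{B^*}\cdot\finish$ blocks, and a cheating block cannot help a run that has not yet committed to $\D$ reach acceptance any more than omitting it would. Concatenating these two cases yields the claimed domination $\prob{\C[\lambda,\theta]}(w) \le \prob{\C[\lambda,\theta]}(u_i\cdot\finish\cdots u_k\cdot\finish)$.

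To make the argument fully rigorous I would phrase it as a coupling or a direct inequality between the two sub-runs: define the distribution $\nu$ reached after $u_1\cdot\finish\cdots u_{i-1}\cdot\finish$ in $w$ and the distribution $\nu'$ reached after the same prefix in the shortened word $u_i\cdot\finish\cdots u_k\cdot\finish$ (which of course has no prefix, so $\nu'$ is just the initial Dirac distribution at $q_0$), and then compare the acceptance probabilities of the common suffix $u_{i+1}\cdot\finish\cdots u_k\cdot\finish$ from the appropriate starting points. The cleanest formulation is to show that the sub-distribution over states from which the common suffix is eventually accepting is pointwise at least as large when the cheating block is omitted, using that the accepting sink of $\D$ is absorbing and $\bot$ is absorbing and non-accepting.

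The step I expect to be the main obstacle is the second case, namely handling the mass that has \emph{not} yet entered $\D$ before block $i$: here one must carefully rule out that a malformed block $u_i$ could, for a run still loitering in the left/right copies or the $\wait$ state, set up a more favourable configuration than simply deleting the block. This requires a precise description of which transitions the checker-letters $\chck$, $\apply$, $\merge$, $\finish$ induce on states outside $\D$, and the observation (already isolated in Proposition~\ref{prop:theta}) that any run segment in $(C\setminus\{\finish\})^+$ contributes acceptance probability at most $\theta$ and that genuine progress towards the unique accepting state must pass through a full, correctly-ordered $\widehat{u}\cdot\finish$ block; a cheating $u_i$ can only waste letters, never shortcut this requirement.
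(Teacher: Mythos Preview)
Your decomposition according to the state reached after the prefix $u_1\cdot\finish\cdots u_{i-1}\cdot\finish$ is the right move, and your treatment of the mass already inside $\D$ is correct. But you are missing the one structural observation that makes the proof a two-line computation rather than a case analysis: after reading \emph{any} word ending in $\finish$, the run of $\C[\lambda,\theta]$ can only be in one of the three states $q_0$, $q_0^\D$, or $\bot$. This is because the letter $\finish$ sends the $\wait$ state to $q_0$, sends $\B$-accepting left-copy states to $q_0^\D$, and sends every other non-$\D$ state to $\bot$; inside $\D$, which is deterministic, $\finish$ either returns to $q_0^\D$ (if a full $\widehat{u}$ has just been read) or falls to $\bot$. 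There is no fourth possibility.

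Once you have this, the case you flag as ``the main obstacle''---mass still ``loitering in the left/right copies or the $\wait$ state''---simply does not exist: the only non-$\D$ state carrying any mass after the prefix is $q_0$ itself. From $\bot$ the suffix contributes $0$; from $q_0^\D$, since $u_i\notin\widehat{B^*}$, the deterministic checker $\D$ goes to $\bot$ and again contributes $0$; from $q_0$ the suffix contributes exactly $\prob{\C[\lambda,\theta]}(u_i\cdot\finish\cdots u_k\cdot\finish)$. Hence one gets the \emph{equality}
\[
\prob{\C[\lambda,\theta]}(w) \;=\; \prob{\C[\lambda,\theta]}\bigl(q_0 \xrightarrow{\,u_1\cdot\finish\cdots u_{i-1}\cdot\finish\,} q_0\bigr)\cdot \prob{\C[\lambda,\theta]}(u_i\cdot\finish\cdots u_k\cdot\finish),
\]
and the inequality follows since the first factor is at most $1$. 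Your plan for the ``hard case'' (arguing that a malformed block cannot put a run outside $\D$ into a more favourable configuration than the Dirac mass at $q_0$) is both unnecessary and, as written, not obviously sound: without the three-state fact you would genuinely have to compare acceptance of the suffix from an arbitrary intermediate state against acceptance from $q_0$, and nothing you cite (in particular Proposition~\ref{prop:theta}) gives that comparison.
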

\begin{proof}
	After reading $u_1\cdots u_{i-1}\cdot\finish$, a run must be in one of the following three states: $q_0$, $q_0^\D$, and $\bot$. As $u_i \notin \widehat{B^*}$, reading it from $q_0^\D$ will lead to $\bot$. Thus, $$\prob{\C[\lambda,\theta]}(w) = \prob{\C[\lambda,\theta]}(q_0 \xrightarrow{ u_1\cdots u_{i-1}\cdot\finish}q_0) \cdot \prob{\C[\lambda,\theta]}(u_i\cdots u_k\cdot\finish)\ ,$$ and Proposition~\ref{prop:cheatonce} follows.
\end{proof}

Finally, Proposition~\ref{prop:upper} shows that $\C[\lambda,\theta]$ cannot have value 1 if $\B_\lambda$ does not.

\begin{proposition}
\label{prop:upper}
	For all words $w \in C^*$ such that $\prob{\C[\lambda,\theta]}(w) > \theta$, there exists a word $v \in B^*$ such that$$\prob{\B_\lambda}(u) \ge \frac{\prob{\C[\lambda,\theta]}(w) - \theta}{1-\theta}\ .$$
\end{proposition}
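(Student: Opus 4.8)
The plan is to show that any word $w$ beating the threshold $\theta$ can be \emph{purified} into a pure sequence of fair blocks whose value in $\B_\lambda$ dominates. (The statement should read $\prob{\B_\lambda}(v)$ on the left, with $v$ the witness produced below.) First I would dispose of the degenerate case: by Proposition~\ref{prop:theta} a word with no $\finish$ has value at most $\theta$, so the hypothesis $\prob{\C[\lambda,\theta]}(w) > \theta$ forces $w$ to contain at least one $\finish$. Since the unique accepting state $q_0^\D$ is entered only by reading $\finish$ from an accepting state and is left only by completing a valid $\widehat{B^*}$-block, the accepting mass after $w$ resides at $q_0^\D$ and is unchanged if we truncate $w$ at its last $\finish$; hence I may assume $w = u_1 \cdot \finish \cdots u_k \cdot \finish$ with each $u_j \in (C\setminus\{\finish\})^*$.

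Next I would locate the last cheating block, i.e. the largest index $i$ with $u_i \notin \widehat{B^*}$ (if there is none the word is entirely fair and the argument below applies with an empty cheat prefix and $a_0 = 0$). Proposition~\ref{prop:cheatonce} applied at $i$ shows that discarding everything before $u_i$ does not decrease the value, so $\prob{\C[\lambda,\theta]}(w) \le \prob{\C[\lambda,\theta]}(u_i\cdot\finish\cdots u_k\cdot\finish)$, where now $u_{i+1},\ldots,u_k$ are fair, say $u_j = \widehat{v_j}$. I then track the run through the \emph{first} $\finish$: it is distributed over $\{q_0^\D, q_0, \bot\}$, and the mass $a_0$ entering $q_0^\D$ equals the probability that the $\finish$-free prefix $u_i$ leaves the run in an accepting state, which is at most $\theta$ by (the argument of) Proposition~\ref{prop:theta}. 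The complementary $q_0$-mass $b_0 \le 1-a_0$ then reads only the fair suffix $\widehat{v_{i+1}}\cdot\finish \cdots \widehat{v_k}\cdot\finish$, while the $q_0^\D$-mass, being accepting, is preserved by each later fair block (by determinism of $\D$).

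The heart of the argument is the bound: from $q_0$, any concatenation of fair blocks reaches $q_0^\D$ with probability at most $P := \max_{j>i} \prob{\B_\lambda}(v_j)$. By Proposition~\ref{prop:lower}, reading a single fair block $\widehat{v}\cdot\finish$ from $q_0$ sends mass exactly $\theta^{|v|}\prob{\B_\lambda}(v)$ to $q_0^\D$, returns mass exactly $1-\theta^{|v|}$ to $q_0$, and drops the rest into $\bot$; writing $a_j = \theta^{|v_j|}\prob{\B_\lambda}(v_j)$ and $d_j = 1-\theta^{|v_j|}$, the total accepting mass telescopes to $\sum_t a_{j_t}\prod_{s<t} d_{j_s}$, and since $a_j = (1-d_j)\,\theta^{|v_j|}\prob{\B_\lambda}(v_j)/\theta^{|v_j|} \le (1-d_j)P$ this sum is at most $P\bigl(1-\prod_s d_{j_s}\bigr) \le P$. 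Combining the two parts gives $\prob{\C[\lambda,\theta]}(w) \le a_0 + b_0 P \le a_0 + (1-a_0)P = P + a_0(1-P)$; as $P\le 1$ this is increasing in $a_0$, and $a_0 \le \theta$ yields $\prob{\C[\lambda,\theta]}(w) \le \theta + (1-\theta)P$. Taking $v$ to be the maximiser defining $P$ gives $\prob{\B_\lambda}(v) = P \ge \frac{\prob{\C[\lambda,\theta]}(w)-\theta}{1-\theta}$, as required.

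I expect the main obstacle to be the telescoping bound for fair blocks: it hinges on the exact one-step behaviour of the simulation layer (mass $\theta^{|v|}\prob{\B_\lambda}(v)$ to $q_0^\D$, mass $1-\theta^{|v|}$ back to $q_0$, provided by Proposition~\ref{prop:lower}) together with the observation that accepting mass already at $q_0^\D$ survives every subsequent fair block, so the block contributions add without interference — which is precisely what makes the geometric-looking sum collapse. The secondary subtlety is justifying $a_0 \le \theta$, namely that no cheating before the first $\finish$ can manufacture more than $\theta$ of accepting mass; this is exactly the phenomenon recorded by Proposition~\ref{prop:theta}, and combining it cleanly with Proposition~\ref{prop:cheatonce} is what organises the case analysis.
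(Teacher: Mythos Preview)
Your proof is correct and follows essentially the same route as the paper's: decompose $w$ into $\finish$-separated blocks, use Proposition~\ref{prop:cheatonce} to strip everything before the last unfair block so that only the leading block can cheat, bound that cheat mass by $\theta$ via Proposition~\ref{prop:theta}, and then control the fair suffix by a weighted-average argument over the $\prob{\B_\lambda}(v_j)$'s. The paper's own proof is a four-sentence sketch of exactly this plan; your write-up simply fills in the details it omits --- in particular the explicit telescoping bound $\sum_j a_j\prod_{l<j} d_l \le P(1-\prod_j d_j)\le P$ from $a_j=(1-d_j)\prob{\B_\lambda}(v_j)$, and the reduction to a word ending in $\finish$ --- and actually derives the stated inequality $\prob{\B_\lambda}(v)\ge (\prob{\C[\lambda,\theta]}(w)-\theta)/(1-\theta)$, whereas the paper's last line stops at the qualitative ``some $v_i$ beats the $\C$-value of $w$''.
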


\begin{proof}
	Let us write $w=u_1\cdot\finish\cdots u_k\cdot\finish$ with $u_1, \ldots, u_k \in (C\setminus\{\finish\})^*$. By Proposition~\ref{prop:theta}, $k>1$, and by Proposition~\ref{prop:cheatonce} we can assume that $u_2,\ldots,u_k$ belong to $\widehat{B^*}$. Let $v_2,\ldots,v_k$ be the words of $B^*$ such that $u_i = \widehat{v_i}$. The $\C[\lambda,\theta]$-value of $w$ can be seen as a weighted average of 1 (the initial cheat, with a weight of $\theta$) and the $\B_\lambda$ values of the $v_i$'s (the weight of $\prob{\B_\lambda}(v_i)$ is the probability that the run enters $\D$ while reading $u_i$).  It follows that at least one of the $v_i$'s has a $\B_\lambda$-value greater than the $\C[\lambda,\theta]$-value of $w$
\end{proof}

Thus, for each $\lambda$ and $\theta$, the value of $\C[\lambda,\theta]$ is 1 if and only if the value of $\B_\lambda$ is 1. As all the $B_\lambda$'s have the same value, which is equal to the value of $\A$, we get:
		$$\begin{array}{rcl}
					\exists \Delta, \val{\M[\Delta]} = 1	&	\Longrightarrow	&	\exists \lambda, \val{\B_\lambda} = 1\\
										&	\Longrightarrow	&	\val{\A} = 1\\
										&	\Longrightarrow &	\forall \lambda, \val{\B_\lambda} = 1\\
										&	\Longrightarrow &	\forall \Delta, \val{\M[\Delta]} = 1\ .
		\end{array}$$
		
Theorem~\ref{thm:main} follows.

\section{Consequences}
\label{sec:consequences}
In this section, we show several consequences of the recursive inseparability results from Theorem~\ref{thm:main}
and of the construction from Lemma~\ref{lem:construction}.
The first is a series of undecidability results for variants of the value $1$ problem.
The second is about probabilistic B\"uchi automata with probable semantics, as introduced in~\cite{BBG12}.

\subsection{The noisy value $1$ problems}
\label{subsec:noise}
Observe that Theorem~\ref{thm:main} implies the following corollary:

\begin{corollary}
Both the universal and the existential value $1$ problems are undecidable.
\end{corollary}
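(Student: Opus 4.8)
The corollary asserts that both the universal and the existential value $1$ problems are undecidable, and this follows immediately from Theorem~\ref{thm:main}. The plan is to exploit the standard fact, already noted right after the statement of the theorem, that recursive inseparability of two disjoint problems implies that each of them is undecidable on its own. Since Theorem~\ref{thm:main} establishes that ``all instances have value $1$'' and ``no instance has value $1$'' are recursively inseparable, I would first record that each of these two properties is individually undecidable, and then connect them to the universal and existential value $1$ problems as formally defined in Section~\ref{sec:defs}.

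First I would make the connection between the two properties of Theorem~\ref{thm:main} and the two decision problems of the corollary explicit. The \emph{universal value $1$ problem} asks, given a NPA $\A$, whether $\val{\A[\Delta]} = 1$ for all $\Delta$; its language of positive instances is exactly the set of NPA for which all instances have value $1$. Hence the universal value $1$ problem coincides with the first problem of Theorem~\ref{thm:main}, and its undecidability follows. For the \emph{existential value $1$ problem}, which asks whether there exists $\Delta$ with $\val{\A[\Delta]} = 1$, the complement of its language of positive instances is precisely the set of NPA for which no instance has value $1$, i.e.\ the second problem of Theorem~\ref{thm:main}. Since a problem is decidable if and only if its complement is decidable, the existential value $1$ problem is undecidable as well.

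To justify that each problem of Theorem~\ref{thm:main} is undecidable, I would give the short argument that recursive inseparability is stronger than undecidability. If, say, the first problem (with language $L_A$) were decidable, then there would be a recursive language $L = L_A$ with $L_A \subseteq L$ and $L \cap L_B = \emptyset$ (the latter because $L_A$ and $L_B$ are disjoint), contradicting inseparability; the symmetric argument, taking $L$ to be the complement of $L_B$, handles the second problem. This is the observation labelled ``Note that it implies that both $A$ and $B$ are undecidable'' in the excerpt, so no new machinery is required.

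There is essentially no technical obstacle here, as the corollary is a direct logical consequence of an already-proved, strictly stronger statement; the only point requiring care is the bookkeeping about \emph{which} of the two problems corresponds to the universal and which to the existential value $1$ problem, and in particular remembering that for the existential problem it is the \emph{complement} of its positive-instance language that matches the ``no instance has value $1$'' property. Since decidability is preserved under complementation, this poses no difficulty, and the corollary follows.
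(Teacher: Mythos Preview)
Your proposal is correct and follows exactly the route the paper intends: the paper simply writes ``Observe that Theorem~\ref{thm:main} implies the following corollary'' and relies on the remark after Theorem~\ref{thm:main} that recursive inseparability of $L_A$ and $L_B$ forces both to be non-recursive. Your identification of the universal problem with $L_A$ and of the complement of the existential problem with $L_B$, together with closure of decidability under complementation, is precisely the unpacking of that one-line observation.
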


We can go further.
Note that the universal and existential value $1$ problems quantify over all possible probabilistic transition functions.
Here we define two more realistic problems for NPA, where the quantification is restricted to probabilistic transition
functions that are $\varepsilon$-close to a given probabilistic transition function:

\begin{itemize}
	\item \textbf{The noisy existential value $1$ problem:} given a NPA $\A$, a probabilistic transition function $\Delta$
	and $\varepsilon > 0$, determine whether 
	there exists $\Delta'$ such that $|\Delta' - \Delta| \le \varepsilon$ and $\val{\A[\Delta']} = 1$.
	\item \textbf{The noisy universal value $1$ problem:} given a NPA $\A$, a probabilistic transition function $\Delta$
	and $\varepsilon > 0$, determine whether 
	for all $\Delta'$ such that $|\Delta' - \Delta| \le \varepsilon$, we have $\val{\A[\Delta']} = 1$.
\end{itemize}

It follows from Lemma~\ref{lem:construction} that both problems are undecidable:
\begin{corollary}
Both the noisy universal and the noisy existential value $1$ problems are undecidable.
\end{corollary}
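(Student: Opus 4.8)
The plan is to reduce the noisy problems to the original universal and existential value~$1$ problems by exploiting the key fact, established in the fair coin construction of Section~\ref{subsec:fair_coin}, that the value of $\B_\lambda$ equals the value of $\A$ \emph{for every} $\lambda \in\,]0,1[$, and likewise that the simulation construction $\C$ behaves identically across a whole range of $\lambda$ and $\theta$. The crucial structural feature is that the NPA $\M$ produced by Lemma~\ref{lem:construction} has the property that \emph{all} its instances agree on whether the value is~$1$: either $\val{\M[\Delta]}=1$ for every $\Delta$, or $\val{\M[\Delta]}<1$ for every $\Delta$. This robustness is exactly what makes the noisy variants no easier than the unrestricted ones.

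First I would observe that since the equivalence in Lemma~\ref{lem:construction} makes the value-$1$ status of $\M$ independent of the chosen instance $\Delta$, any fixed transition function $\Delta_0$ consistent with the support of $\M$ and any $\varepsilon>0$ will serve as a valid noise centre. Given a simple PA $\A$, I would run the construction of Lemma~\ref{lem:construction} to obtain $\M$, then pick an arbitrary instantiation $\Delta_0$ (any assignment of admissible probabilities to the supported transitions) and take, say, $\varepsilon = \nicefrac{1}{4}$. Every $\Delta'$ with $|\Delta'-\Delta_0|\le\varepsilon$ is still an instance of $\M$ with the same support, so by Lemma~\ref{lem:construction} we have $\val{\M[\Delta']}=1$ if and only if $\val{\A}=1$, uniformly over the noise ball.

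Consequently the noisy universal value~$1$ problem on input $(\M,\Delta_0,\varepsilon)$ answers ``YES'' precisely when $\val{\A}=1$, and the noisy existential value~$1$ problem on the same input answers ``YES'' precisely when $\val{\A}=1$ as well, since the two quantifiers coincide on the dichotomous family of instances. Either reduction therefore decides the value~$1$ problem for the simple PA $\A$, which is undecidable by Theorem~\ref{thm:value_1}. Hence both noisy problems are undecidable.

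The only delicate point is ensuring that the entire $\varepsilon$-ball around $\Delta_0$ consists of genuine instances of $\M$, i.e.\ that shrinking or perturbing the probabilities does not accidentally zero out a supported transition or introduce mass on a non-transition; this is handled by choosing $\Delta_0$ in the interior of the admissible region and $\varepsilon$ small enough that every $\Delta'$ in the ball keeps each supported transition strictly positive while respecting the stochasticity constraints. Once this is secured, the dichotomy of Lemma~\ref{lem:construction} does all the work, and I expect no further obstacle.
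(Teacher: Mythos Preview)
Your argument is correct and follows essentially the same route as the paper: exploit the dichotomy from Lemma~\ref{lem:construction} so that on the constructed NPA~$\M$ the existential and universal quantifiers over any $\varepsilon$-ball collapse, and hence a decision procedure for either noisy problem would decide the value~$1$ problem for simple PA. Two small remarks: your opening sentence states the reduction direction backwards (you are reducing \emph{from} the value~$1$ problem \emph{to} the noisy problems, as your body correctly does), and your ``delicate point'' about the ball staying inside the support is not actually needed---the noisy problems quantify only over $\Delta'$ consistent with~$T$ (otherwise $\M[\Delta']$ is undefined), and Lemma~\ref{lem:construction} already covers every such~$\Delta'$.
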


Indeed, we argue that the construction from Lemma~\ref{lem:construction} implies a reduction
from either of these problems to the value $1$ problem for simple PA, hence the undecidability.
Let $\A$ be a simple PA, the construction yields a NPA $\M$ such that:
\[
\val{\A} = 1\ \Longleftrightarrow\ \forall \Delta, \val{\M[\Delta]} = 1\ \Longleftrightarrow\ \exists \Delta, \val{\M[\Delta]} = 1\ .
\]
It follows that for any probabilistic transition function $\Delta$ and any $\varepsilon > 0$, we have:
\[
\begin{array}{ccc}
\val{\A} = 1 & \Longleftrightarrow\ & 
\left(\forall \Delta', |\Delta' - \Delta| \le \varepsilon \Longrightarrow \val{\M[\Delta']} = 1\right)\\
             & \Longleftrightarrow\ & 
\left(\exists \Delta', |\Delta' - \Delta| \le \varepsilon \wedge \val{\M[\Delta']} = 1\right)\ .
\end{array}
\]
This completes the reduction.

\subsection{Probabilistic B\"uchi automata with probable semantics}
\label{subsec:buchi}

We consider PA over infinite words, as introduced in~\cite{BBG12}.
A probabilistic B\"uchi automaton (PBA) $\A$ can be equipped with the so-called probable semantics,
defining the language (over infinite words):
$$L^{>0}(\A) = \set{w \in A^\omega \mid \prob{\A}(w) > 0}\ .$$

It was observed in~\cite{BBG12} that the value $1$ problem for PA (over finite words)
easily reduces to the emptiness problem for PBA with probable semantics (over infinite words).

Informally, from a PA $\A$, construct a PBA $\A'$ by adding a transition 
from every final state to the initial state labelled with a new letter $\sharp$.
(From a non-final state, the letter $\sharp$ leads to a rejecting sink.)
As explained in~\cite{BBG12}, this simple construction ensures that $\A$ has value $1$ if and only if $\A'$ is non-empty, 
equipped with the probable semantics.

This simple reduction, together with Theorem~\ref{thm:main},
implies the following corollary:
\begin{corollary}
The two following problems for numberless PBA with probable semantics
are recursively inseparable:
\begin{itemize}
	\item all instances have a non-empty language,
	\item no instance has a non-empty language.
\end{itemize}
\end{corollary}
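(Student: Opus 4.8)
The plan is to lift the PA-to-PBA reduction of~\cite{BBG12} to the numberless setting and to check that it transports the recursive inseparability of Theorem~\ref{thm:main} verbatim. The crucial observation is that the $\sharp$-augmentation only adds \emph{deterministic} transitions, so it can be performed purely at the level of supports and therefore defines a computable map on numberless automata.

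First I would make the construction numberless. Given a numberless PA $\A = (Q, A, q_0, T, F)$, I define a numberless PBA $\A'$ over the alphabet $A \cup \set{\sharp}$ by keeping all transitions of $T$, adjoining a fresh rejecting sink state $\bot$ together with the transitions $(q, \sharp, q_0)$ for every $q \in F$ and $(q, \sharp, \bot)$ for every $q \notin F$, making $\bot$ absorbing, and equipping $\A'$ with the Büchi acceptance condition prescribed in~\cite{BBG12}. Since every transition added in this step is deterministic, the support of $\A'$ is completely determined by $T$ and $F$, so the map $\A \mapsto \A'$ is effective.

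Second, and this is the only point requiring care, I would verify that the construction commutes with instantiation. For every $\Delta$ consistent with $T$ there is a canonical $\Delta'$ consistent with the support of $\A'$, obtained by extending $\Delta$ and assigning probability $1$ to each new deterministic transition; conversely, every instance of $\A'$ arises in this way, because the new transitions are forced to be deterministic and do not interact with the free probabilities of $T$. Hence $\A'[\Delta'] = (\A[\Delta])'$, the PBA produced from the instance $\A[\Delta]$ by the reduction of~\cite{BBG12}, and by their observation $\A[\Delta]$ has value $1$ if and only if $\A'[\Delta']$ has non-empty language under the probable semantics. Since the instances of $\A'$ are exactly the $\A'[\Delta']$, this yields the two equivalences: all instances of $\A$ have value $1$ iff all instances of $\A'$ have non-empty language, and no instance of $\A$ has value $1$ iff no instance of $\A'$ has non-empty language.

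Finally I would assemble these into a reduction. The computable map $\A \mapsto \A'$ sends the language ``all instances have value $1$'' into ``all instances have non-empty language'' and the language ``no instance has value $1$'' into ``no instance has non-empty language'', while the two target languages remain disjoint. Consequently any recursive set separating the two PBA problems would pull back along this computable map to a recursive set separating the two value-$1$ problems of Theorem~\ref{thm:main}, a contradiction. The main obstacle is the commutation check of the second step; once it is in place, the conclusion is pure bookkeeping on top of Theorem~\ref{thm:main} and the reduction of~\cite{BBG12}.
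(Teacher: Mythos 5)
Your proposal is correct and follows essentially the same route as the paper, which likewise combines Theorem~\ref{thm:main} with the $\sharp$-augmentation reduction of~\cite{BBG12}, merely stating in one sentence what you verify in detail. Your explicit check that the added transitions are deterministic, hence support-determined, so that the construction commutes with instantiation, is exactly the (implicit) content of the paper's one-line argument.
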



\section*{Acknowledgments}\label{sec:Acknowledgments}

We would like to thank the referees for their helpful comments.


%
%
%

\bibliographystyle{plain}
\bibliography{bib}

%
\end{document}